\newtheorem{theorem}{Theorem}
\theoremstyle{definition}
\newtheorem{definition}{Definition}
\theoremstyle{remark}
\newtheorem{remark}{Remark}
\algrenewcommand\algorithmicrequire{\textbf{Input:}}
\newcommand{\comment}[1]{}
\newcommand{\ra}{\rightarrow}
\newcommand{\eps}{\varepsilon}
\newcommand{\N}{\mathbb{N}}
\newcommand{\E}[1]{\mathbb{E}\left[ #1 \right]}
\newcommand{\indic}[1]{\mathds{1}_{\{#1\}}}
\newcommand{\vehiclelength}{L}
\newcommand{\timeheadway}{h}
\newcommand{\standstilldist}{S_0}
\newcommand{\speedlim}{V_f}
\newcommand{\minaccel}{a_{\min}}
\newcommand{\Speedmode}{\text{speed tracking}}
\newcommand{\Safetymode}{\text{safety}}
\newcommand{\StateofDySys}{X}
\newcommand{\Agent}[1]{Vehicle{#1}}
\newcommand{\agent}[1]{vehicle{#1}}
\newcommand{\source}[1]{on-ramp{#1}}
\newcommand{\sink}[1]{off-ramp{#1}}
\newcommand{\route}[1]{route{#1}}
\newcommand{\numsources}{\mathcal{M}}
\newcommand{\numpaths}{\mathcal{P}}
\newcommand{\numcells}[1]{r_{#1}}
\newcommand{\timestep}{\tau}
\newcommand{\routingmatrix}{R}
\newcommand{\Arrivalrate}[1]{\lambda_{#1}}
\newcommand{\Numarrivals}[1]{A_{#1}}
\newcommand{\Numdepartures}[1]{D_{#1}}
\newcommand{\Cumuldepartures}[1]{\bar{D}_{#1}}
\newcommand{\Queuelength}[1]{Q_{#1}}
\newcommand{\Nodedegree}[1]{N_{#1}}
\newcommand{\Avgload}[1]{\rho_{#1}}
\newcommand{\Nodedest}[1]{M_{#1}}
\newcommand{\Lyap}[1]{V\left( #1 \right)}
\newcommand{\Cyclelength}{T}
\newcommand{\StateofMC}{Y}
\newcommand{\StateofMCExp}{Z}
\newcommand{\StateSpace}{\mathcal{S}}
\newcommand{\MaxNumArrival}[3]{\tilde{A}_{#1}^{#2}(#3)}
\newcommand{\DesDecreaseconst}{\gamma_{1}}
\newcommand{\ReleasetimeInc}[1]{\theta_{#1}}
\newcommand{\ReleasetimeDecConst}[1]{\gamma_{#1}}
\newcommand{\Releasetime}[1]{g_{#1}}
\newcommand{\ReleasetimeAdj}[1]{\beta_{#1}}
\newcommand{\Tempty}{T_{\text{free}}}
\newcommand{\updateperiod}{T_{\text{per}}}
\newcommand{\DRR}{\text{DRRA}}
\begin{document}

\title{\LARGE \bf
Throughput of Freeway Networks under Ramp Metering Subject to Vehicle Safety Constraints}
\author{Milad~Pooladsanj$^{1}$,
        Ketan~Savla$^{2}$,
        and~Petros~A.~Ioannou$^{1}$
\thanks{$^{1}$M. Pooladsanj and P. A. Ioannou are with the Department
of Electrical Engineering, University of Southern California, Los Angeles, CA 90007 USA {\tt\small pooladsa@usc.edu; ioannou@usc.edu}.}
\thanks{$^{2}$K. Savla is with the Sonny Astani Department of Civil and Environmental Engineering, University of Southern California, Los Angeles CA 90089 USA {\tt\small ksavla@usc.edu}. K. Savla has financial interest in Xtelligent, Inc.} 
\thanks{
This work was supported in part by NSF CMMI 1636377 and METRANS 19-17.}
}\maketitle
\begin{abstract}
Ramp metering is one of the most effective tools to combat traffic congestion. In this paper, we present a ramp metering policy for a network of freeways with arbitrary number of on- and off-ramps, merge, and diverge junctions. The proposed policy is designed at the microscopic level and takes into account vehicle following safety constraints. In addition, each on-ramp operates in cycles during which it releases vehicles as long as the number of releases does not exceed its queue size at the start of the cycle. Moreover, each on-ramp dynamically adjusts its release rate based on the traffic condition. To evaluate the performance of the policy, we analyze its throughput, which is characterized by the set of arrival rates for which the queue sizes at all on-ramps remain bounded in expectation. We show that the proposed policy is able to maximize the throughput if the merging speed at all the on-ramps is equal to the free flow speed and the network has no merge junction. We provide simulations to illustrate the performance of our policy and compare it with a well-known policy from the literature.

\end{abstract}
\section{Introduction}\label{sec:introduction}
Ramp metering, which controls the inflow of traffic to a freeway, is one of the most efficient tools to combat traffic congestion. In this paper, we propose a microscopic (vehicle)-level ramp metering policy for freeway networks with arbitrary number of on- and off-ramps, merge and diverge junctions. Our approach builds up on our previous work \cite{pooladsanj2022saturation}, where we considered a single freeway with no merge or diverge junctions. 
\par
\comment{
The ramp metering design problem is often done for a stretch of a freeway, with no freeway-to-freeway merge or diverge, and by using macroscopic traffic flow models. 
First paragraph: 
\begin{itemize}
        Control the inflow for a general freeway network consists of merge and diverge.
    \item Why is it important?
    \begin{itemize}
        \item Merging, either through on-ramps or other freeways, is a main source of congestion
    \end{itemize}
    \item How can we address it?
    \begin{itemize}
        \item Can be taken into account in RM design 
        \item We design RM at the microscopic level
    \end{itemize}
\end{itemize}
}
Ramp metering policies are commonly designed using macroscopic traffic flow models, which use spatio-temporal averaging of vehicle interactions to model the traffic. Many ramp metering policies have been proposed in the literature using this approach \cite{papageorgiou2002freeway, papageorgiou2003review}. These policies are either: (i) fixed-time \cite{wattleworth1965peak}, where no real-time traffic measurement is used, or (ii) local traffic-responsive \cite{papageorgiou1991alinea}, where on-ramps use only local measurements, or (iii) coordinated traffic-responsive \cite{papageorgiou1990modelling, papamichail2010coordinated, gomes2006optimal}, where on-ramps also use measurements from other parts of the network. In spite of its generality, the macroscopic-level approach is not suitable in the context of Connected and Automated Vehicles (CAVs) as it does not capture some of their capabilities, such as the ability to dissipate traffic disturbance and provide accurate measurements of traffic through Vehicle-to-Vehicle (V2V) and Vehicle-to-Infrastructure (V2I) communication systems \cite{stern2018dissipation,pooladsanj2020vehicle,pooladsanj2021vehicle}. Additionally, macroscopic-level ramp metering policies do not account for vehicle following safety constraints. As a consequence, the vehicles may find themselves in unsafe situations near an on-ramp, which can trigger congestion or even an accident. While adding safety filters on top of the ramp metering policy is a potential solution, it may adversely affect other performance criteria such as the total travel time \cite{pooladsanj2022saturation}. 
\par 
A more systematic approach is to explicitly include vehicle safety, connectivity, and automation protocols in the ramp metering design by considering the vehicles at the microscopic level. Somewhat surprisingly, the literature on the microscopic-level ramp metering design is sparse and only limited to isolated on-ramps \cite{rios2016automated, Malikopoulos.2017}. Relatively little attention has been given to analyzing the overall performance of a freeway network. For example, are certain system-level performance metrics such as \emph{throughput} optimized? To bridge this gap and expand our previous work for a single freeway \cite{pooladsanj2022saturation}, we consider a network of freeways to design and analyze a microscopic ramp metering policy that can maximize the throughput.  
\comment{
\begin{itemize}
    \item RM design often done at the macroscopic level
    \item Requires traffic flow model which is challenging for general networks, especially because of merge
    \item it is possible to do for individual freeway segment with no merge or diverge [reference], but this may lead to suboptimal policies. 
    \item For example, it could create congestion at a merge junction, spillback to individual incoming freeway segments
    \item Cannot capture the impact of CAVs
\end{itemize}

Third paragraph: more literature review
\begin{itemize}
    \item More natural to do at the microscopic level, especially because it can capture impact of CAVs, connectivity may be explicitly taken into account
    \item There is a large gap in the literature. Most focus on designing at a single on-ramp level. Could be inefficient for the same reasons mentioned before.
    \item Motivated by our previous work which concerned a single freeway segment, we extend it to general networks 
\end{itemize}
}
\par
The freeway network is modeled as a directed acyclic graph, with arbitrary number of on- and off-ramps, merge, and diverge junctions. \Agent{s} arrive from outside the network to the \source{s} and join the \source{} queues. The \source{} arrivals are sampled from i.i.d Bernoulli processes that are independent across different \source{s}, and the vehicle routes are sampled independently from a routing matrix. Once released from an \source{}, each \agent{} follows standard safety and speed rules until it reaches its destination \sink{}. \Agent{s} in the network have the same acceleration and braking capabilities, and are equipped with V2V and V2I communication systems. We leverage these connectivity capabilities to design a coordinated traffic-responsive ramp metering policy. The proposed policy operates under vehicle following safety constraints, where new vehicles are released only if there is enough gap on the mainline. Additionally, the on-ramps operate under synchronous cycles during which each \source{} does not release more \agent{s} than its queue size at the start of the cycle. Moreover, an \source{}'s release rate is adjusted during a cycle according to the traffic condition in the network. 
\par
We evaluate the performance of a policy in terms of its throughput, which is characterized by the set of arrival rates under which the queue sizes at every \source{} remain bounded in expectation. We show that the proposed policy is able to maximize the throughput when the merging speed at all the on-ramps equals the free flow speed and the network has no merge junction, e.g., a single freeway.
\par
The remainder of the paper is organized as follows: in Section~\ref{Section: Problem Formulation}, we describe the network structure, the \agent{s} behavior in the network, the demand model, and the performance metric. We provide our ramp metering policy and its performance analysis in Section~\ref{sec:results}. We provide simulation results in Section~\ref{sec:simulation}, and conclude the paper in Section~\ref{sec:conclusion}.
\par
We collect key notations to be used throughout the paper. Let $\N$ and $\N_{0}$ respectively denote the set of positive integers, and non-negative integers. For $m \in \N$, $[m]$ denotes the set $\{1, \ldots, m\}$. For a set $A$, $|A|$ denotes its cardinality.
\comment{
It is expected that CAVs will improve the traffic efficiency besides providing safety and comfort. However, the path toward achieving this goal is not clear \cite{wang2022ego}. 
\par
V2X communication is the key element which makes the designed control strategies efficient. These strategies could have been inefficient otherwise due to e.g., safety issues. 
\par
Our paper concerns the decision making process... . In particular, we do not design vehicle controllers that achieve a certain goal. We rather assume such controllers exist... .
}

\section{Problem Formulation}\label{Section: Problem Formulation}

\subsection{Network Topology}\label{sec:network-topology}
\begin{figure}[htb!]
    \centering
    \includegraphics[width=0.38\textwidth]{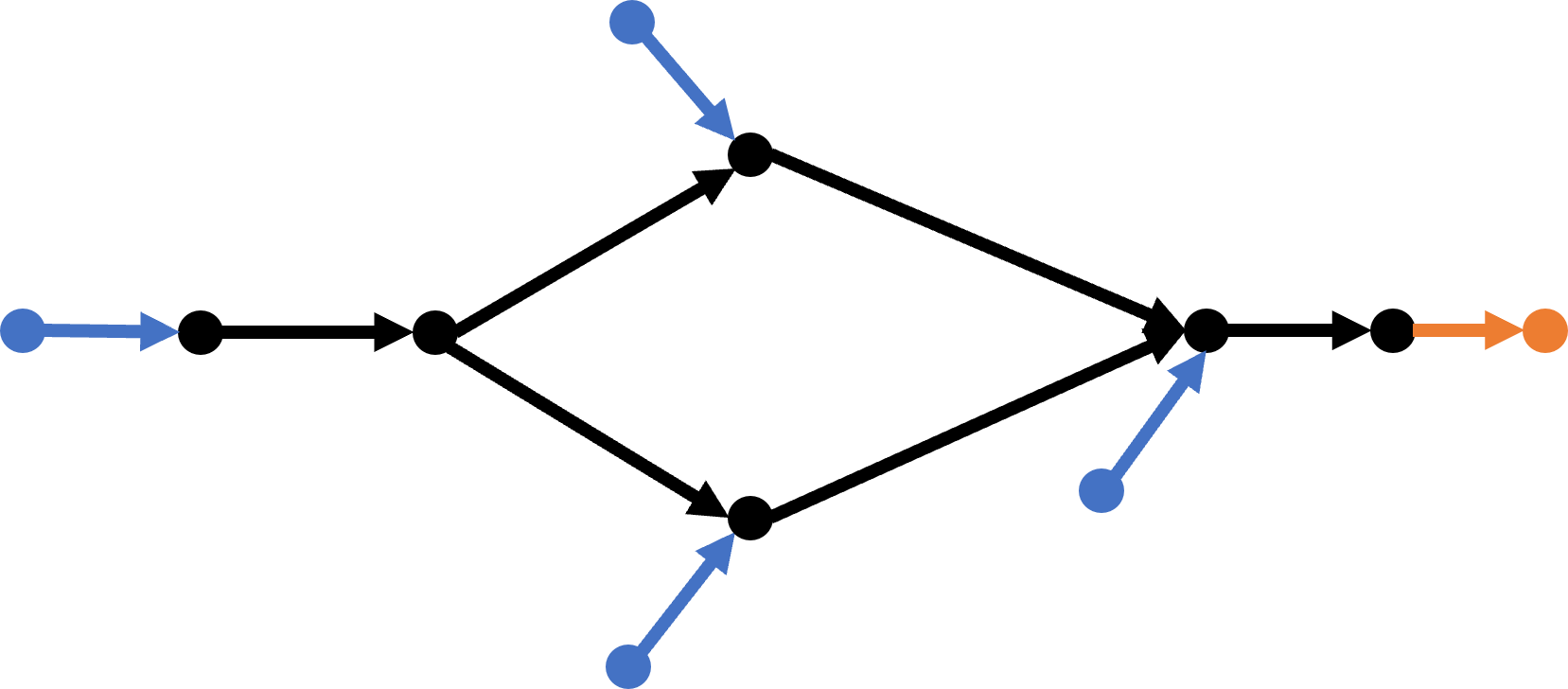}
    
    \vspace{0.1 cm}
    
    \caption{\sf An acyclic network with $4$ \source{s} (blue), $1$ \sink{} (orange), $6$ freeway segments, $1$ merge node, and $1$ diverge node.}
    \label{fig:network-example}
\end{figure}
The network is modeled as a directed acyclic graph. 
The nodes represent either junctions (merge or diverge), or sources/sinks; the edges represent either single-lane freeway segments that connect two junctions, or on- and \sink{s} that connect sources to merge nodes and diverge nodes to sinks, respectively; see Figure~\ref{fig:network-example}. We refer to any subset of the freeway segments as the mainline. For simplicity, we assume that each source has exactly one outgoing \source{}, and each sink has exactly one incoming \sink{}. We distinguish between two types of merge node: \source{} nodes, which have at least one incoming on-ramp, and regular merge nodes (or simply merge nodes), which have no incoming on-ramp. Similarly, we distinguish between an \sink{} node and a diverge node. We denote the set of \source{} nodes by $\numsources$. For simplicity, we assume that each \source{} node has exactly one incoming \source{} edge. Thus, $\numsources$ also specifies the set of \source{} edges. For \source{} $i \in \numsources$, we let $\numsources_{i}^{+}$ be the set of its successor \source{s}, i.e., the set of all \source{s} that are immediately downstream of \source{} $i$. An element of $\numsources_{i}^{+}$ is denoted by $i^{+}$. Similarly, we let $\numsources_{i}^{-}$ be the set of its predecessor \source{s}. We say that two edges are adjacent if the head of one edge is the tail of the other edge. A \emph{\route{}} is a sequence of adjacent edges, with the first edge being an \source{} and the last edge being an \sink{}. The collection of all \route{s} is denoted by $\numpaths$. 
\par
\Agent{s} arrive to \source{s} according to some exogenous stochastic process and join the \source{} queues. For any \source{}, we assume a point queue model with unlimited capacity. We also assume that \agent{s} that reach their destination \sink{} node, immediately leave the network without creating any obstruction. Once a vehicles is released from an on-ramp into the mainline, it follows standard speed and safety rules, which will be explained in the following section. 


\subsection{\Agent{}-Level Objectives}\label{sec:vehicle-control}
\Agent{s} have the same length $\vehiclelength$, the same acceleration and braking capabilities, and are equipped with V2V and V2I communication systems. The exact information communicated by a \agent{} will be specified in Section~\ref{sec:results}, where we introduce our ramp metering policy. We use the term \emph{ego} \agent{} to refer to a specific \agent{} under consideration, and denote it by $e$. Consider a \agent{} following scenario and let $v_e$ (resp. $v_l$) be the speed of the ego \agent{} (resp. its leading \agent{}), and $S_e$ be the \emph{safety distance} between the two \agent{s} required to avoid collision. We assume that $S_e$ satisfies
    $S_e = \timeheadway v_e + \standstilldist + \frac{v^2_{e} - v^2_l}{2|\minaccel|}$,
which is calculated based on an emergency stopping scenario with details given in \cite{Ioannou.Chien.1993}. Here, $\timeheadway > 0$ is a safe time headway constant, $\standstilldist > 0$ is an additional constant gap, and $\minaccel < 0$ is the minimum possible deceleration of the leading \agent{}. We consider two general modes of operation for each \agent{}: the \emph{$\Speedmode$} mode and the \emph{$\Safetymode$} mode. The main objective in the $\Speedmode$ mode is to adjust and maintain the speed to the \emph{free flow speed} $\speedlim$ when the ego \agent{} is far from any leading \agent{}; the main objective in the $\Safetymode$ mode is to avoid rear-end collision once the ego \agent{} gets close to the leading \agent{}.
\par
We assume that \agent{s} obey the following rules:
\begin{itemize}
\item[(VC1)] The ego \agent{} maintains the constant speed $\speedlim$ if:
\begin{itemize}
    \item[(a)] it is at a safe distance with respect to its leading \agent{} \footnote{At a diverge node, the leading \agent{} is uniquely determined by the ego \agent{}'s route choice.}, i.e., $y_e \geq S_e$, where $y_e$ is the distance between the two \agent{s}. Note that if its leading \agent{} is also at the constant speed $\speedlim$, then $y_e \geq S_e = \timeheadway \speedlim + \standstilldist$. This distance is equivalent to a time headway of at least $\timestep := \timeheadway + (\standstilldist + \vehiclelength)/\speedlim$ between the front bumpers of the two \agent{s}.
    \item[(b)] it is near a merge node and \emph{predicts} to be at a safe distance with respect to its leading \agent{} after reaching the merging node, i.e.,  $\hat{y}_e(t_m) \geq \hat{S}_e(t_m) := \timeheadway \hat{v}_e(t_m) + \standstilldist + \frac{\hat{v}_e^2(t_m) - \hat{v}_{\hat{l}}^2(t_m)}{2|\minaccel|}$, where $t_m$ is the time the ego \agent{} predicts to reach the merge node, $\hat{y}_e$ is the predicted distance between the two \agent{s}, and $\hat{v}$ is the predicted speed. 
    Each vehicle uses the following speed prediction rule to calculate its predicted time of merging, position, and speed: if in the $\Speedmode$ mode, it assumes that it remains in this mode in the future; if in the $\Safetymode$ mode, it assumes a constant speed trajectory. These information are then communicated among all \agent{s} near a merge node via V2V communication. 
\end{itemize}


\item[(VC2)] there exists $\Tempty$ such that for any initial condition, \agent{s} reach the \emph{free flow equilibrium} after at most $\Tempty$ time if no other \agent{} is released from the \source{s}. The free flow equilibrium refers to a state where each \agent{} is moving at the constant speed $\speedlim$ and will maintain this speed in the future if no additional \agent{} is released from the \source{s}.
\end{itemize}
\begin{remark}\label{remark:simplifying-assumption}
Due to lack of space, we assume that the ego \agent{} immediately joins the mainline after being released. In other words, we ignore the \agent{} dynamics between release from an \source{} and joining the mainline. We also assume the speed at which the ego \agent{} joins the mainline, i.e., the merging speed, is $\speedlim$. The general case where these assumptions are relaxed can be treated similar to \cite{pooladsanj2022saturation}. We should emphasize, however, that the merging speed assumption can affect the performance of a ramp metering policy; see Remark~\ref{remark:low-merge-speed}.  
\end{remark}
In order to have a complete description of the network, we also need to specify the demand at the microscopic level. We will do this in the next section.

\subsection{Demand Model}
For convenience in the analysis, we adopt a discrete time setting with time steps of size $\timestep$, where $\timestep$ is the minimum safe time headway at the free flow speed defined in Section~\ref{sec:vehicle-control}. Let \agent{s} arrive to \source{} $i \in \numsources$ according to an i.i.d. Bernoulli process with parameter $\Arrivalrate{i} \in [0,1]$. The arrival processes are assumed to be independent across different \source{s}. That is, at any given time step, the probability that a \agent{} arrives at the $i$-th \source{} is $\Arrivalrate{i}$ independent of everything else. We let $\Arrivalrate{} := [\Arrivalrate{i}]$. A \agent{} arriving to \source{} $i$, chooses to take \route{} $p \in \numpaths$ with probability $\routingmatrix_{ip} \in [0,1]$ independent of everything else. Naturally, for every \source{} $i$, $\sum_{p}\routingmatrix_{ip} = 1$, and $\routingmatrix_{ip} = 0$ for any $p$ that is not compatible with the network topology. We assume that the \agent{s} do not change their route choice while in the network. We stack the routing probabilities in a \emph{routing matrix} $\routingmatrix = [\routingmatrix_{ip}]$. Let $\Avgload{i} = \sum_{j}\sum_{p: i \in p}\Arrivalrate{j}\routingmatrix_{jp}$ be the \emph{average induced load} on \source{} $i$, where the notation $i \in p$ means that \source{} $i$'s node is on \route{} $p$. This is the average number of arrivals in the network that wants to cross \source{} $i$'s node in order to reach their destination. The performance of a ramp metering policy is quantified in terms of the maximum demand it can handle. We will formalize this in the next section.

\subsection{Ramp Metering and Performance Metric}\label{sec:ramp-metering}
To conveniently track vehicle locations in discrete time, we introduce the notion of \emph{slot}. A slot is associated with a particular point on the mainline at a particular time. Consider the maximum number of distinct slots that can be placed on a freeway segment, such that the distance between adjacent slots is $\timeheadway \speedlim + \standstilldist + \vehiclelength$, i.e., the minimum safety distance at the free flow speed as per (VC1) plus the vehicle length. Consider a configuration of these slots at $t=0$. At the end of each time step $\timestep$, each slot replaces the next slot with the last slot replacing the first slot of that segment. Without loss of generality, we let the location of the first and last slots of each segment coincide with the tail and head of that segment at the end of each time step. Thus, if the ego \agent{} is released from an \source{} at the end of a given time step, its location coincides with a slot.
\par
For $i \in \numsources$, let $\Queuelength{i}(t)$ be the set of the \route{} choices of the \agent{s} waiting at \source{} $i$, arranged in the order of their arrival, at time $t$. We use $|\Queuelength{i}(t)|$ to denote the queue size at \source{} $i$ at time $t$. Let $|\Queuelength{}(t)| = [|\Queuelength{i}(t)|]$ be the vector of queue sizes at all the \source{s} at time $t$. A ramp metering policy is a function $\pi(t) = [\pi_i(t)]$, where $\pi_i(t)=1$ if a vehicle is released from \source{} $i$ at time $t$, and $\pi_i(t)=0$ otherwise. The key metric we use to evaluate the performance of a ramp metering policy is its throughput, which is defined as follows: for a given ramp metering policy $\pi$ and routing matrix $\routingmatrix$, let  $U_{\pi, \routingmatrix}$ be the set of $\Arrivalrate{}$'s for which $\limsup_{t \to \infty} \E{|\Queuelength{i}(t)|} < \infty$ for all $i \in \numsources$. We say that the network is under-saturated if $\Arrivalrate{} \in U_{\pi, \routingmatrix}$; otherwise, it is called saturated. The throughput is the boundary of the set $U_{\pi, \routingmatrix}$. We are interested in finding a ramp metering policy $\pi$ such that for every $\routingmatrix$ and any other policy $\pi'$, we have $U_{\pi',\routingmatrix} \subseteq U_{\pi,\routingmatrix}$. We say such a policy maximizes the throughput.



\section{Main Results}\label{sec:results}
\subsection{Dynamic Release Rate Policy with Rate Allocation}\label{sec:release-rate}
\begin{figure}[ht]
    \centering
    \includegraphics[width=0.38\textwidth]{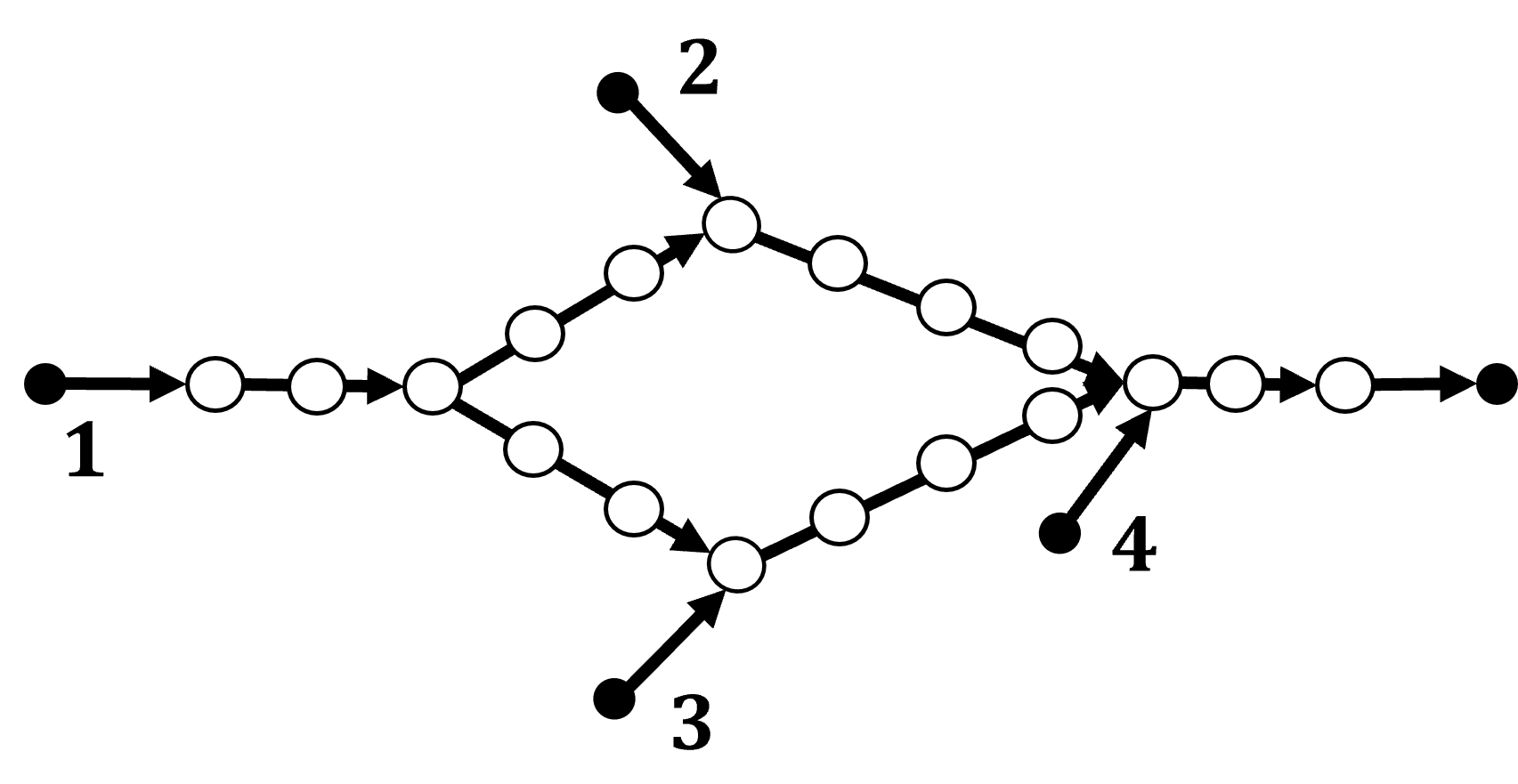}
    
    \vspace{0.1 cm}
    
    \caption{\sf A configuration of slots (empty circles) at time $t=0$. For this configuration and by choosing $b_i=3$, $a_i=1$, $i \in [3]$, and $b_4=a_4=1$ the release times can be made conflict-free.}
    \label{fig:network-example-slot}
\end{figure}
In this section we present and analyze our ramp metering policy. The policy presented is coordinated, in the sense that each \source{} may require information from other parts of the network in addition to the information obtained from its vicinity. However, it does not require any information about the demand or the route choice of the \agent{s}, i.e., it is \emph{reactive} \cite{papamichail2008traffic}. The policy operates under synchronous \emph{cycles} of fixed length during which each \source{} does not release more \agent{s} than its queue size at the beginning of the cycle. The synchronization of cycles is done once offline. This policy allocates a fixed maximum release rate to each \source{}. In addition, it imposes dynamic minimum time gap criterion between release of successive \agent{s} from the same \source{}. Changing the time gap between release of successive \agent{s} by an \source{} is akin to changing its release rate. An inner approximation to the throughput of this policy is provided, and is then compared to an outer approximation.
\begin{definition}\label{Def: DRR-RM Policy}
\textbf{(Dynamic Release Rate policy with rate Allocation ($\DRR$))} 
The policy works in cycles of fixed length $\Cyclelength \timestep$, where $\Cyclelength \in \N$. At the beginning of the $k$-th cycle at $t_k = (k-1)\Cyclelength \timestep$, each \source{} allocates itself a ``quota" equal to the queue size at that \source{} at $t_k$. At time $t \in [t_k, t_{k+1}]$ during the $k$-th cycle, \source{} $i$ releases the ego \agent{} if:
\begin{itemize}
\item[(M1)] $t = (b_{i}m + n)\timestep$, where $m \in \N_{0}$, $n \in \{n_{1}, \ldots, n_{a_{i}}\} \subseteq [b_i]$, and $a_i, b_i \in \N$, $a_i \leq b_i$, are such that the release times at all the \source{s} are \emph{conflict-free}, meaning that no two \agent{s} that move at the constant speed $\speedlim$ after being released can simultaneously reach a merging node; see Figure~\ref{fig:network-example-slot}. The term $a_i/b_i$ is the maximum release rate allocated to \source{} $i$. 
\item [(M2)] the \source{} has not reached its quota. 
\item[(M3)] the ego \agent{} is at a safe distance with respect to its leading and following \agent{s} at the moment of release.
\item[(M4)] at least $\Releasetime{}(t)$ time has passed since the release of the last \agent{} from \source{} $i$.
\end{itemize}
\par
Once an \source{} reaches its quota, it does not release a \agent{} during the rest of the cycle. The minimum time gap $\Releasetime{}(t)$ is piecewise constant, updated periodically at $t = \updateperiod, 2\updateperiod, \ldots$, as described in Algorithm~\ref{alg:DRR-rate-aloc}. In Algorithm \ref{alg:DRR-rate-aloc}, $\StateofDySys_f$ is defined as follows: for the ego \agent{}, let $x_e^T=((y_e - S_e)\indic{y_e < S_e}, v_e - \speedlim, a_e)$, where $\indic{y_e < S_e}$ is the indicator function, and $a_e$ is the acceleration. The state of all the \agent{s} is collectively denoted as $\StateofDySys^T=\left(x_e^T\right)_{e \in [n]}$, where $n$ is the total number of \agent{s} on the mainline. We let $\StateofDySys_{f_1}(t) := \|\StateofDySys(t)\|$. Moreover, for $t \geq \updateperiod$, let $\StateofDySys_{f_2}(t) := \sum |\hat{y}_e(t_m) - \hat{S}_e(t_m)|$, where the sum is over all the \agent{s} in the time interval $[t-\updateperiod,t]$ that predicted to violate the safety distance after reaching a merging node. We let $\StateofDySys_{f}(t) = \StateofDySys_{f_1}(t) + \StateofDySys_{f_2}(t)$. Roughly, $\StateofDySys_{f}(t)$ measures the distance to the free flow equilibrium state at time $t$.
\begin{algorithm}[H]
\caption{Update rule for the minimum time gap between release of successive \agent{s}}\label{alg:DRR-rate-aloc}
\begin{algorithmic}
\Require \textbf{design constants:}~$\updateperiod>0, \DesDecreaseconst > 0$, $\ReleasetimeDecConst{2} > 0$, $\ReleasetimeInc{}^{\circ} > 0$,
\par
\hspace{2.6cm} $\ReleasetimeAdj{} > 1$
\par
~\textbf{initial condition:}~$\Releasetime{}(0) = 0,~ \ReleasetimeInc{}(0) = \ReleasetimeInc{}^{\circ}$,
\par
\hspace{2.6cm} $\StateofDySys_f(0) = \StateofDySys_{f_1}(0)$

\hspace{-0.475in} \textbf{for} $t = \updateperiod, 2\updateperiod, \cdots$ \textbf{do} 

\If{$\StateofDySys_f(t) \leq \max\{\StateofDySys_f(t-\updateperiod) - \DesDecreaseconst{}, 0\}$ } 
\State $\ReleasetimeInc{}(t) \gets \ReleasetimeInc{}(t-\updateperiod)$
\State $\Releasetime{}(t) \gets \max\{\Releasetime{}(t-\updateperiod) - \ReleasetimeDecConst{2}, 0\}$
\Else{}
\State $\ReleasetimeInc{}(t) \gets \ReleasetimeAdj{} \ReleasetimeInc{}(t-\updateperiod)$
\State $\Releasetime{}(t) \gets \Releasetime{}(t-\updateperiod) + \ReleasetimeInc{}(t)$

\EndIf

\hspace{-0.475in} \textbf{end for}
\end{algorithmic}
\end{algorithm}
\end{definition}
\begin{theorem}\label{Prop: stability of DRR-RM}
For any initial condition, $\Cyclelength \in \N$, design constants in Algorithm~\ref{alg:DRR-rate-aloc}, and $a_i, b_i \in \N$ such that $a_i \leq b_i$, $i \in \numsources$, and the release times are conflict-free, the $\DRR$ policy keeps the network under-saturated if 
   $ \Avgload{i} < a_i/b_i$ for all $i \in \numsources$.
\end{theorem}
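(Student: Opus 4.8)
The plan is to analyze the Markov chain $\StateofMC_k$ obtained by observing the entire network state at the cycle boundaries $t_k = (k-1)\Cyclelength\timestep$, and to establish a negative drift for the Lyapunov function $\Lyap{\StateofMC_k} = \sum_{i \in \numsources} |\Queuelength{i}(t_k)|^2$ whenever the aggregate queue is large, concluding via a standard Foster--Lyapunov argument that $\limsup_{t}\E{|\Queuelength{i}(t)|} < \infty$. The state $\StateofMC_k$ comprises the queue sizes $|\Queuelength{}(t_k)|$, the mainline configuration $\StateofDySys(t_k)$, and the internal variables $\Releasetime{}(t_k), \ReleasetimeInc{}(t_k)$ of Algorithm~\ref{alg:DRR-rate-aloc}. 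A useful preliminary observation is that, because each segment has finite length and a positive minimum inter-vehicle spacing, only finitely many \agent{s} occupy the mainline, so $\StateofDySys$ ranges over a bounded set and only the queues and the time gap are genuinely unbounded coordinates. Over one cycle each queue obeys $|\Queuelength{i}(t_{k+1})| = |\Queuelength{i}(t_k)| + \Numarrivals{i} - \Numdepartures{i}$, with $\E{\Numarrivals{i}} = \Arrivalrate{i}\Cyclelength$, so the whole proof reduces to showing that $\E{\Numdepartures{i}}$ exceeds $\Arrivalrate{i}\Cyclelength$ by a fixed margin once $|\Queuelength{i}(t_k)|$ is large.

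The decisive structural fact is a capacity decomposition at each node. Writing $\Avgload{i} = \Arrivalrate{i} + \sum_{j \neq i}\sum_{p:\, i \in p}\Arrivalrate{j}\routingmatrix_{jp}$ exhibits $\Avgload{i} - \Arrivalrate{i}$ as the mean rate of through-traffic that the conflict-free schedule (M1) must route past \source{} $i$'s node, while (M1) reserves slots at that node at total rate $a_i/b_i$. Hence the mean rate of slots left free for a freshly released \agent{} from \source{} $i$ is at least $a_i/b_i - (\Avgload{i} - \Arrivalrate{i})$, which the hypothesis $\Avgload{i} < a_i/b_i$ renders strictly larger than $\Arrivalrate{i}$. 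To turn this accounting into a drift bound I would process the \source{s} in a topological order of the acyclic network: the most upstream on-ramps see no through-traffic, so their queues drain at rate $a_i/b_i > \Arrivalrate{i} = \Avgload{i}$, and their output supplies the through-traffic term at downstream nodes, letting the inequality $\Avgload{i} < a_i/b_i$ be consumed node by node.

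The main obstacle is to certify that \source{} $i$ actually seizes these free slots, i.e. that the safety rule (M3) and the time-gap rule (M4) do not persistently throttle releases when the queue is backlogged. This is the role of the adaptive gap $\Releasetime{}(t)$, and here the hypothesis $\Avgload{i} < a_i/b_i$ re-enters in a second, less obvious way: since the slots at every node carry strictly less than their reserved rate, the mainline retains spare capacity and can relax toward the free flow equilibrium between releases. Invoking (VC2), once releases are spaced by more than $\Tempty$ the state $\StateofDySys_f$ returns to $0$, forcing the ``If'' branch of Algorithm~\ref{alg:DRR-rate-aloc}, which lowers $\Releasetime{}$ by $\ReleasetimeDecConst{2}$ each period $\updateperiod$ and suppresses the ``Else'' branch whose increment $\ReleasetimeInc{}(t)$ grows geometrically through $\ReleasetimeAdj{}$. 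The key lemma I would prove is that, under this slack, the ``Else'' branch activates only finitely often along each trajectory, so that $\Releasetime{}(t)$ stays bounded and (M4) forfeits at most a bounded number of release opportunities per cycle. Disentangling the geometric gap dynamics from the mainline relaxation guaranteed by (VC2)---and doing so uniformly enough to feed the drift estimate---is the delicate core of the argument, and is where the network's merge coupling makes the present setting harder than the single-freeway case of \cite{pooladsanj2022saturation}.

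With the gap controlled, the conflict-free schedule (M1) guarantees \source{} $i$ a release opportunity at rate $a_i/b_i$, and the capacity decomposition yields $\E{\Numdepartures{i} \mid \StateofMC_k} \geq (\Arrivalrate{i} + \eps)\Cyclelength$ for some $\eps > 0$ whenever $|\Queuelength{i}(t_k)|$ exceeds a threshold, after accounting for the bounded fluctuations of the random arrivals and route choices within a cycle. Substituting into the quadratic Lyapunov recursion gives $\E{\Lyap{\StateofMC_{k+1}} - \Lyap{\StateofMC_k} \mid \StateofMC_k} \leq -\eps' \sum_i |\Queuelength{i}(t_k)| + C$ outside a bounded set, with $C < \infty$ coming from the bounded mainline state and the controlled gap. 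The Foster--Lyapunov criterion then delivers positive recurrence and the bound $\limsup_t \E{|\Queuelength{i}(t)|} < \infty$, completing the proof.
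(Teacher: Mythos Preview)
Your proposal has a genuine gap in the drift step. The capacity decomposition you invoke---that on-ramp $i$ sees free valid slots at mean rate $a_i/b_i - (\Avgload{i} - \Arrivalrate{i}) > \Arrivalrate{i}$---is a fluid-level identity that presumes upstream on-ramps feed through-traffic at their \emph{arrival} rates $\Arrivalrate{j}$. But whenever an upstream on-ramp $j$ is backlogged it releases at rate $a_j/b_j$, not $\Arrivalrate{j}$, and those releases can occupy on-ramp $i$'s valid slots: the conflict-free clause of (M1) only prevents simultaneous arrivals at regular merge nodes, not at on-ramp nodes (witness the paper's running example, where $a_4=b_4=1$ yet every vehicle from on-ramps $1,2,3$ must cross node~$4$, so (M3) can block on-ramp~$4$ at every time step). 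Hence $|\Queuelength{i}|$ can grow over a cycle even when it is already large, the bound $\E{\Numdepartures{i}\mid \StateofMC_k}\ge(\Arrivalrate{i}+\eps)\Cyclelength$ fails, and $\sum_i|\Queuelength{i}|^2$ does not exhibit the claimed negative drift. Appealing to a topological order does not close this: stability of upstream queues yields long-run output $\Arrivalrate{j}$, but not a one-cycle bound conditional on the downstream queue being large, which is what your quadratic drift needs.

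The paper resolves this coupling by replacing $|\Queuelength{i}|$ with the \emph{degree} $\Nodedegree{i}$---the number of vehicles anywhere in the system that still must cross node~$i$---and taking as Lyapunov function $\bigl(\max_{I\in U}\sum_{j\in I}\Nodedegree{j}\bigr)^2$, where $U$ is a recursively built family of predecessor index multisets. The mechanism is twofold: if through-traffic blocks on-ramp $i$, those very vehicles decrement $\Nodedegree{i}$ as they cross, so $\Nodedegree{i}$ still drains at rate at least $a_i/b_i$; and if instead a valid slot at $i$ passes empty, then on-ramp $i$ has exhausted its quota, forcing $|\Queuelength{i}|\le\Cyclelength$ and hence $\Nodedegree{i}\le\sum_{j\in\numsources_{i}^{-}}\Nodedegree{j}+\text{const}$, after which the argument recurses upstream until it hits a set $J\in U$ and bounds everything by $\Nodedegree{}(n\triangle)$. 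This peeling over the acyclic structure---not a per-node capacity balance---is the device that produces the drift inequality, and it is the missing idea in your sketch. A smaller point: the paper's argument that the adaptive gap settles (equivalently $\StateofDySys_f=0$ after finite time) does not use the slack $\Avgload{i}<a_i/b_i$ at all; it follows from (VC2) and the geometric growth of $\ReleasetimeInc{}$ alone, which eventually forces a release-free window long enough to empty the mainline.
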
 
\begin{proof}
See Appendix \ref{Section: (Appx) Proof of stability of DRR-RM}.
\comment{
A sketch is provided in Appendix~\ref{sec:proof-sketch}. The complete proof can be found in the arXiv version of this paper.
}
\end{proof}
\textbf{V2I communication requirements}: This policy uses information about $|\Queuelength{}|$ (if $\Cyclelength > 1$) and $\StateofDySys_f$. After a finite time, the mainline \agent{s} only communicate $\StateofDySys_{f_1}$ to all the \source{s} at the end of each update period $\updateperiod$. The number of \agent{s} that constitute $\StateofDySys_{f_1}$ is no more than the total number of slots. Furthermore, at the end of each time step during a cycle, the \agent{s} near an \source{} communicate their state to that \source{} for the safety gap evaluation. 
\begin{remark}\label{remark:nonreactive-DRRA}
\textbf{A non-reactive version of the $\DRR$ policy with better throughput}: suppose that the first \agent{} in every \source{}'s queue also communicates its route choice to that \source{}. Consider a  version of the $\DRR$ policy, where, in addition to the releasing times in (M1), \source{} $i$ releases the ego \agent{} at $t=m\timestep$, $m \in \N_{0}$, if the ego \agent{}'s \route{} contains no merge node. This policy is non-reactive since it uses the route choice of the \agent{s}, but gives a better throughput since more \agent{s} are released compared to the $\DRR$ policy. We compare the performance of these two policies via simulations in Section~\ref{sec:simulation}.
\end{remark}
\begin{remark}\label{remark:low-merge-speed}
Recall from Section~\ref{sec:vehicle-control} that we made the simplifying assumption that the merging speed at all the \source{s} is $\speedlim$. One can relax this assumption, but the throughput would suffer due to the larger safety distance required at the moment of merging. However, the main steps of the proof of Theorem~\ref{Prop: stability of DRR-RM} remain the same and a similar bound for the throughput can be obtained; see \cite[Theorem 2]{pooladsanj2022saturation}. 
\end{remark}

\subsection{A Necessary Condition}
\label{sec:necessary}
We now provide a necessary condition for under-saturation, against which we benchmark the sufficient condition in Theorem~\ref{Prop: stability of DRR-RM}. Let $\Cumuldepartures{\pi,p}(m\timestep)$ be the cumulative number of vehicles that has crossed point $p$ on the mainline up to time $m\timestep$, $m \in \N_{0}$, under the ramp-metering policy $\pi$. Then, the crossing rate at point $p$ is defined as $\Cumuldepartures{\pi,p}(m\timestep)/m$ and the ``long-run" crossing rate is $\limsup_{m \ra \infty}\Cumuldepartures{\pi,p}(m\timestep)/m$. 
Recall the definition of $\Avgload{i}$ for \source{} $i$'s node. We can extend this definition to every node in the network that is not a source or sink. Let $\Avgload{} := \max_{i}\Avgload{i}$, where $i$ is any node that is not a source or sink.
\begin{theorem}\label{Prop: Necessary condition for stability}
Suppose that the long-run crossing rate is no more than one vehicle per $\timestep$ seconds for at least one point on each freeway segment. If the network is under-saturated under the policy $\pi$, then $\Avgload{} \leq 1$.
\end{theorem}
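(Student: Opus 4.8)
The plan is to prove a rate-balance statement: for an arbitrary junction node $i^*$ (neither a source nor a sink), I would show that under-saturation forces the long-run crossing rate at a suitable point on an adjacent freeway segment to equal the induced load $\Avgload{i^*}$, and then invoke the hypothesis to cap this rate at one. Since $i^*$ is a junction, it is either a merge node (possibly an on-ramp node), having a unique outgoing freeway segment $s$, or a diverge node (possibly an off-ramp node), having a unique incoming freeway segment $s$. Because a freeway segment carries no intermediate on- or off-ramps, every vehicle whose route contains $i^*$ must traverse all of $s$, and conversely every vehicle traversing $s$ has $i^*$ on its route. Let $p^*$ be the point on $s$ whose long-run crossing rate is at most one vehicle per $\timestep$, guaranteed by the hypothesis.

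Next I would set up a conservation identity. Let $\bar{N}_{i^*}(m\timestep)$ be the cumulative number of vehicles that have arrived to the network by time $m\timestep$ with $i^*$ on their route. Because arrivals are independent Bernoulli processes and routes are sampled independently from $\routingmatrix$, the strong law of large numbers gives $\bar{N}_{i^*}(m\timestep)/m \to \Avgload{i^*}$. At time $m\timestep$ each counted vehicle is in exactly one of three states: it has already crossed $p^*$, it is still waiting in some on-ramp queue, or it is on the mainline but upstream of $p^*$. This yields
\[
\bar{N}_{i^*}(m\timestep) = \Cumuldepartures{\pi,p^*}(m\timestep) + R_q(m\timestep) + R_m(m\timestep),
\]
where $R_q(m\timestep) \le \sum_{i}|\Queuelength{i}(m\timestep)|$ and $R_m(m\timestep)$ is at most the number of vehicles the finite-length, single-lane mainline can hold, a finite constant independent of $m$.

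I would then divide by $m$ and let $m \to \infty$. Under-saturation yields $\limsup_{m \to \infty}\E{|\Queuelength{i}(m\timestep)|} < \infty$, so Markov's inequality gives $R_q(m\timestep)/m \to 0$ in probability, while $R_m(m\timestep)/m \to 0$ since $R_m$ is bounded. Hence $\Cumuldepartures{\pi,p^*}(m\timestep)/m \to \Avgload{i^*}$ in probability, and extracting an almost surely convergent subsequence gives $\limsup_{m \to \infty}\Cumuldepartures{\pi,p^*}(m\timestep)/m \ge \Avgload{i^*}$. Combining with the hypothesis $\limsup_{m \to \infty}\Cumuldepartures{\pi,p^*}(m\timestep)/m \le 1$ gives $\Avgload{i^*} \le 1$; maximizing over all such $i^*$ yields $\Avgload{} \le 1$.

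The main obstacle I anticipate is the conservation bookkeeping of the second paragraph: verifying rigorously that the set of vehicles passing $p^*$ coincides (up to those momentarily in transit) with the set of arrivals whose route contains $i^*$, and that both correction terms are $o(m)$. The queue term is $o(m)$ precisely because of under-saturation, and the mainline term is $o(m)$ because the single-lane, finite-length mainline holds only a bounded number of vehicles at any instant — this finiteness, together with the fact that every merge (resp. diverge) node funnels its entire induced load through a single segment, is where the network topology does the essential work. The remaining probabilistic step, upgrading convergence in probability to the pathwise $\limsup$ inequality, is routine via subsequences.
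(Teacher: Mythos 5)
Your argument is correct and is essentially the standard rate-conservation proof that the paper itself defers to (it cites \cite[Theorem 4]{pooladsanj2022saturation} rather than spelling out the steps): identify, for each junction node, an adjacent freeway segment through which its entire induced load must pass, equate the arrival rate of that load with the long-run crossing rate up to the queue and mainline residuals, and use under-saturation plus the bounded mainline occupancy to show both residuals are $o(m)$. The bookkeeping you flag as the main obstacle goes through exactly as you describe, so no gap remains.
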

\begin{proof}
The proof is similar to \cite[Theorem 4]{pooladsanj2022saturation}.
\end{proof}
\begin{remark}
The long-run crossing rate condition in Theorem~\ref{Prop: Necessary condition for stability} is very mild in practice. In particular, if the traffic flow at some point on each freeway segment is less than the mainline capacity, then the long-run crossing rate condition is also satisfied \cite{pooladsanj2022saturation}.
\end{remark}
\begin{remark}
Consider a network with no merge node, and note that for any node $j$ that is not a source or sink, we have $\Avgload{j} \leq \Avgload{i}$ for some $i \in \numsources$. For such a network, one can set $a_i=b_i=1$ for all $i \in \numsources$ in the $\DRR$ policy. The sufficient condition of Theorem~\ref{Prop: stability of DRR-RM} then becomes $\max_{i \in \numsources} \Avgload{i} = \Avgload{} < 1$. Combining with Theorem~\ref{Prop: Necessary condition for stability}, it follows for any routing matrix $\routingmatrix$ and any other ramp metering policy $\pi'$ that $U_{\pi',\routingmatrix} \subseteq U_{\DRR, \routingmatrix}$, except, maybe at the boundary of $U_{\DRR, \routingmatrix}$. Since the boundary of $U_{\DRR, \routingmatrix}$ has zero volume (measure), we can conclude that the $\DRR$ policy maximizes the throughput for all practical purposes.
\end{remark}

\section{Simulations}\label{sec:simulation}
\begin{figure}[t!]
        \centering
        \includegraphics[width=0.45\textwidth]{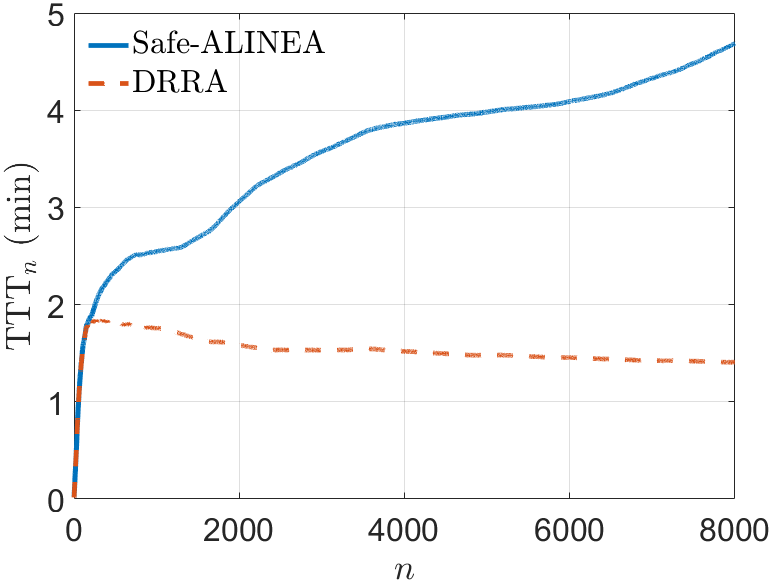}
        \vspace{0.2 cm}
    \caption{\centering\sf The average total travel time ($\text{TTT}_n$) under the Safe-ALINEA and $\DRR$ policies, where $n$ is the number of completed trips.}\label{Fig: Comaring CAD}
\end{figure}
\begin{figure}[htb!]
\begin{center}
    \begin{subfigure}{.35\textwidth}
        \centering
        \includegraphics[width=\textwidth]{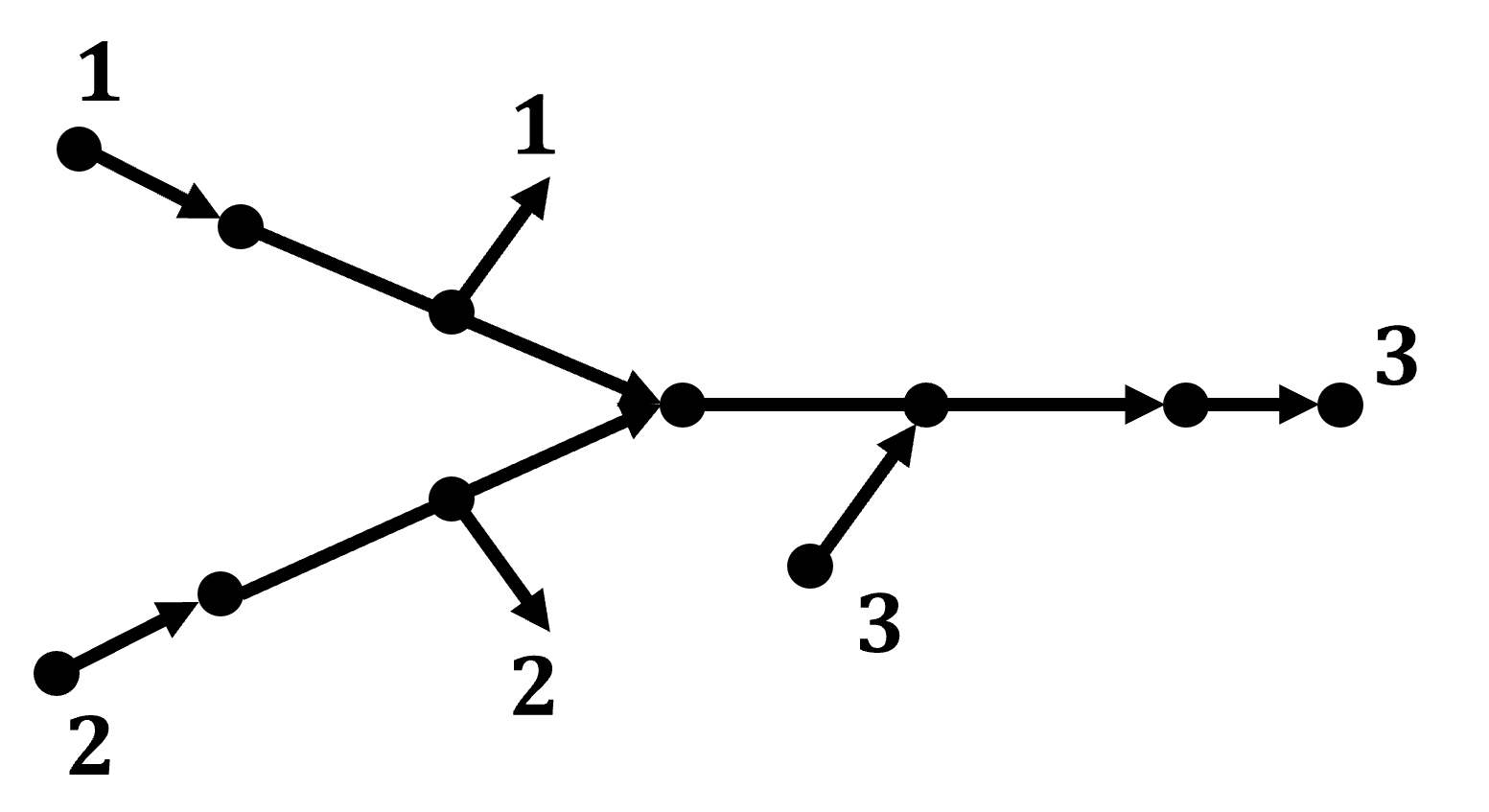}
        \caption{} \label{fig:first-network-sim}
    \end{subfigure}
    \begin{subfigure}{.35\textwidth}
    \vspace{0.2 cm}
        \centering
        \includegraphics[width=\textwidth]{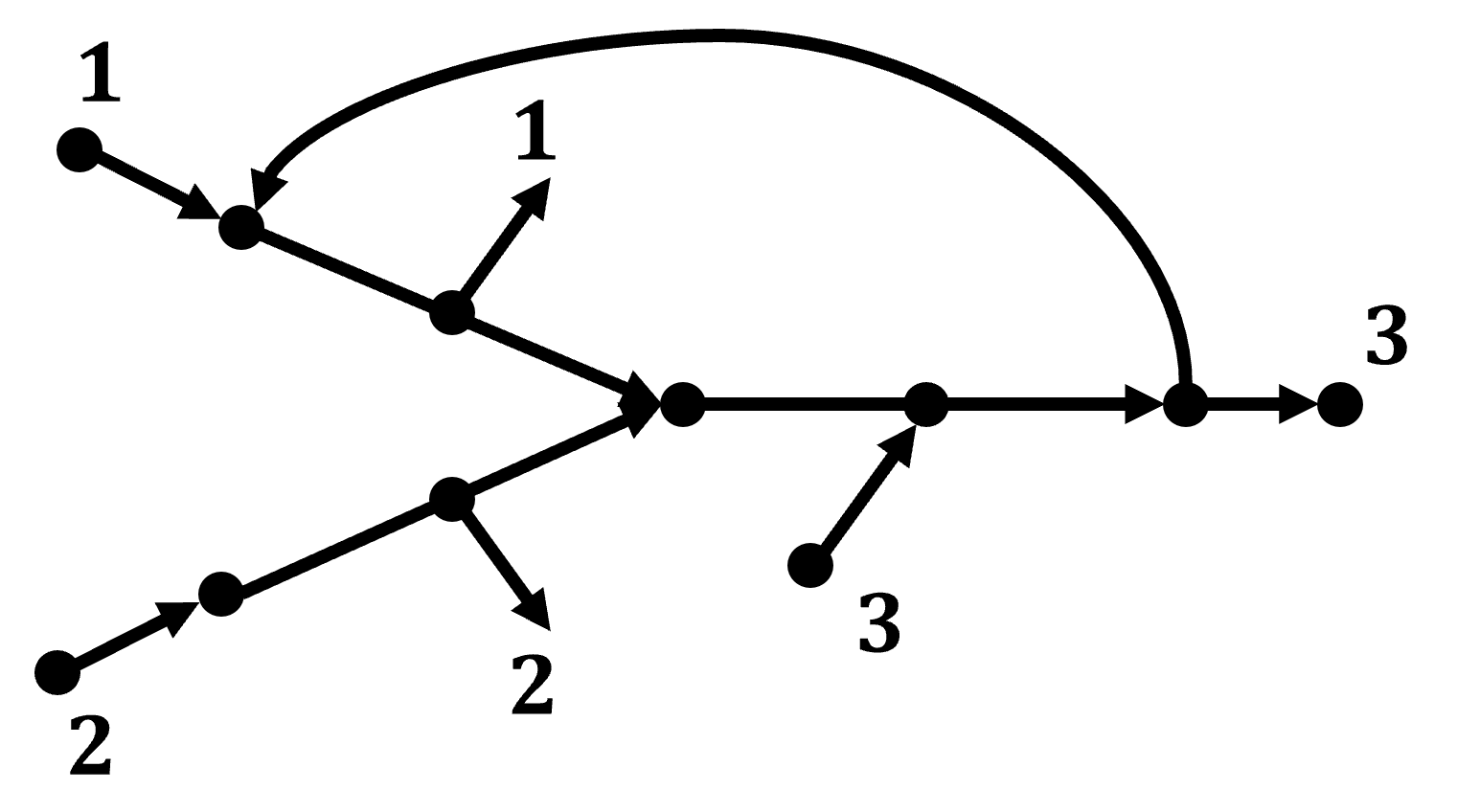}
        \caption{} \label{fig:second-network-sim}
    \end{subfigure}
    \end{center}
    \vspace{0.1 cm}
    \caption{\sf The two networks used in Sections~\ref{sec:network-1} and \ref{sec:network-2}. Figure (a) is a three-legged merge junction, and figure (b) is constructed by making the first network cyclic.}
\end{figure}
We simulate the $\DRR$ policy and compare its performance with its non-reactive version introduced in Remark~\ref{remark:nonreactive-DRRA} and a well-known policy from the literature. All the simulations were performed in MATLAB. We use the following setup in all the simulations: let $\timeheadway = 1.5 ~[\text{s}]$, $\standstilldist = 4~ [\text{m}]$, $\vehiclelength = 4.5 ~[\text{m}]$, and $\speedlim = 15 ~[\text{m/s}]$. The initial queue size at all the \source{s} is set to zero, and \agent{s} arrive at the \source{s} according to i.i.d Bernoulli processes with the same rate $\Arrivalrate{}$.
\subsection{Total Travel Time}\label{sec:travel-time}
In this section, we compare the total travel time under the $\DRR$ policy and ALINEA \cite{papageorgiou1991alinea}, which is a macroscopic ramp metering policy. We also add the safety filter (M3) on top of ALINEA, and use the name Safe-ALINEA to refer to the resulting policy. An \source{}'s outflow in the ALINEA policy is determined according to the following equation:
\begin{equation*}
    r(k) = r(k-1) + K_r(\hat{o}-o(k)).
\end{equation*}
\par
Here, $r(\cdot)$ is the \source{}'s outflow, $K_r$ is a positive design constant, $o(\cdot)$ is the occupancy of the mainline downstream of the \source{}, and $\hat{o}$ is the desired occupancy. We use time steps of size $60~[\text{s}]$, $K_r = 70~[\text{veh/h}]$ \cite{papageorgiou1991alinea}, and $\hat{o} = 13~\%$ corresponding to the capacity flow. For the $\DRR$ policy, we use the following parameters: $\Cyclelength = 13$, $\updateperiod=2\timestep$, $\DesDecreaseconst = 50$, $\ReleasetimeDecConst{2} = 10$, $\ReleasetimeInc{}^{\circ} = 0.1$, $\ReleasetimeAdj{} = 1.01$, and $a_i=b_i=1$, $i=1,2,3$. The network is a ring road with perimeter $1860~[\text{m}]$ and $3$ on- and \sink{s}. The \source{s} are located symmetrically at locations $0$, $600$, and $1200$; the \sink{s} are also located symmetrically $155~[\text{m}]$ upstream of each \source{}. We consider a congested initial condition where the initial number of \agent{s} on the mainline is $100$; each \agent{} is at the constant speed of $6.7~[\text{m/s}]$, and is at the distance $\timeheadway \times 6.7 + \standstilldist \approx 14.1~[\text{m}]$ from its leading \agent{}. The details of the \agent{} following behavior is given in \cite{pooladsanj2022saturation}. We also let $\Arrivalrate{} = 0.54$ and 
\begin{equation*}
    \routingmatrix_1 = \begin{pmatrix}
    0.2 & 0.7 & 0.1 \\
    0 & 0.8 & 0.2 \\
    0.5 & 0 & 0.5
    \end{pmatrix}, 
\end{equation*}
where columns specify the \sink{s}.
\par
We evaluate the average Total Travel Time (TTT$_n$), which is computed by averaging the total travel time of the first $n$ vehicles that complete their trips. The total travel time of a vehicle equals its waiting time in an on-ramp queue plus the time it spends on the freeway to reach its destination. Figure~\ref{Fig: Comaring CAD} shows the simulation result. From the figure, we can see that the $\DRR$ policy provides significant improvement in the TTT$_n$ compared to the Safe-ALINEA policy. In fact, the network becomes saturated under the Safe-ALINEA policy, which implies that the TTT$_n$ will grow steadily with $n$.
\subsection{Acyclic Network with Merge Junction}\label{sec:network-1}
The second network is a symmetric three-legged merge junction; see Figure~\ref{fig:first-network-sim}. The total length of freeway segments on each leg is $310~[\text{m}]$. \source{s} $1$ and $2$ are located at the beginning, while \source{} $3$ is located in the middle of their corresponding legs. Similarly, \sink{s} $1$ and $2$ are in the middle, while \sink{} $3$ is at the end of their corresponding legs. The mainline is initially empty and 
\begin{equation*}
    \routingmatrix_2 = \begin{pmatrix}
    0.6 & 0 & 0.4 \\
    0 & 0.6 & 0.4 \\
    0 & 0 & 1
    \end{pmatrix}.
\end{equation*}
\par
The \source{s} use the $\DRR$ policy with $\Cyclelength = 1$ and other parameters chosen similar to Section~\ref{sec:travel-time}. In addition, the release times of \source{s} $1$ and $2$ are set to $t=(2m+1)\timestep$ and $t=(2m+2)\timestep$, respectively, while $t=m\timestep$ for \source{} $3$, $m \in \N_{0}$ (thus, $a_i=1$, $b_i=2$, $i \in [2]$, and $a_3=b_3=1$). For the given routing matrix, the inner and outer estimates of the under-saturation region given by Theorems~\ref{Prop: stability of DRR-RM} and \ref{Prop: Necessary condition for stability} are $\Arrivalrate{} < 1/2$ and $\Arrivalrate{} < 5/9$, respectively. From simulations, the under-saturation regions of the $\DRR$ policy and its non-reactive version are estimated to be $\Arrivalrate{} < 1/2$ and $\Arrivalrate{} < 5/9$, respectively. This suggest that there is a gap between the inner and outer estimates of the under-saturation region in the $\DRR$ policy, but not in its non-reactive version. 
\subsection{Cyclic Network with Merge Juction}\label{sec:network-2}
The third network is constructed by making the second network cyclic; see Figure~\ref{fig:second-network-sim}. The length of the additional segment in this network is $610~[\text{m}]$. The mainline is initially empty and
\begin{equation*}
    \routingmatrix_3 = \begin{pmatrix}
    0.6 & 0 & 0.4 \\
    0 & 0.6 & 0.4 \\
    0.5 & 0 & 0.5
    \end{pmatrix}.
\end{equation*}
\par
The \source{s} again use the $\DRR$ policy with the same parameters as in the previous section. The inner and outer estimates of the under-saturation region are $\Arrivalrate{} < 1/3$ and $\Arrivalrate{} < 5/9$, respectively. From simulations, the under-saturation regions of the $\DRR$ policy and its non-reactive version are estimated to be $\Arrivalrate{} < 0.4$ and $\Arrivalrate{} < 5/9$, respectively. Therefore, Theorem~\ref{Prop: stability of DRR-RM} holds for this cyclic network. Based on the simulation results and the results in \cite{pooladsanj2022saturation} given for a ring road network, we conjecture that Theorem~\ref{Prop: stability of DRR-RM} holds for general cyclic networks. 

\section{Conclusion and Future Work}\label{sec:conclusion}
In this paper, we provided a microscopic ramp metering policy for freeway networks with arbitrary number of on- and off-ramps, merge, and diverge junctions. We analyzed the throughput of our policy for acyclic networks, and showed via simulations that it achieves a similar throughput for cyclic networks. We plan to confirm this observation analytically in the future. Moreover, we plan to extend our analysis to policies with tighter estimates of the throughput.  

\bibliographystyle{ieeetr}
\bibliography{References_main}

\appendices
\section{Performance Analysis Tool}
\label{sec:performance-analysis}
Consider an initial condition where the \agent{s} are in the free flow equilibrium, and the location of each \agent{} coincides with a slot. Thus, their location coincide with a slot in the future. Thereafter, during each time step, the following sequence of events occurs: (i) the slots on each segment replace the next slot with the last slot replacing the first slot; the numbering of slots is reset with the new first slot numbered $1$, and the rest numbered in an increasing order in the direction of the flow, (ii) \agent{s} that reach their destination \sink{} exit the network, (iii) if permitted by the ramp metering policy, a \agent{} is released into a slot at the free flow speed. Since the initial location of all \agent{s} coincide with a slot and the released times are conflict-free, the location of the released \agent{} will also coincide with a slot in the future.
\par
For the aforementioned initial condition, we can cast the network dynamics as a discrete-time Markov chain, with time steps of size $\timestep$. Let $M(n)$ be the set of the \route{s} of the occupants of all the slots: $M_{\ell}(n)=p$ if the \agent{} occupying slot $\ell$ at time $n$ has to take route $p$, and $M_{\ell}(n)=0$ if slot $\ell$ is empty at time $n$. 
Consider the following discrete-time Markov chain with the state 
    $\StateofMCExp_{\triangle}(n) := \StateofMC(n\triangle)$, $n \geq 0$,
where $\StateofMC(n) := (\Queuelength{}(n),\Nodedest{}(n))$, and $\triangle \in \N$ will be specified for the policy being analyzed. The transition probabilities are determined by the ramp metering policy being analyzed, but will not be specified explicitly for brevity. For the ramp metering policy considered in this paper, the state $\StateofMC(n) = (0, 0)$ is reachable from all other states, and $\mathbb{P}\left(\StateofMC(n+1) = (0, 0)~|~ \StateofMC(n) = (0,0)\right) > 0$. Hence, the Markov chain $\StateofMCExp_{\triangle}$ is irreducible and aperiodic.

\section{Proof of Theorem \ref{Prop: stability of DRR-RM}}\label{Section: (Appx) Proof of stability of DRR-RM}

For any initial condition, we first show that there exists some $k_0 \in \N$ such that for all $k \geq k_0$, $\StateofDySys_f(k\updateperiod) = 0$. This and (VC1) imply that every \agent{} moves at the free flow speed for all $t \geq t_0$. Hence, after a finite time after $t_0$, $\Releasetime{}(\cdot)=0$, and the location of every \agent{} coincides with a slot thereafter. We can then use the Markov chain setting from Appendix~\ref{sec:performance-analysis}.
\par
Using an argument similar to \cite{pooladsanj2022saturation}, one can easily show that $X_f(k\updateperiod)$ is uniformly bounded for all $k \in \N$. To show $X_f(\cdot)=0$ after a finite time, we use a proof by contradiction. Suppose that $X_f(k\updateperiod) \neq 0$ for infinitely many $k \in \N$. Since $\StateofDySys_f(k\updateperiod)$ is uniformly bounded, there exists an infinite sequence $\{k_n\}_{n \geq 1}$ such that $\StateofDySys_f(k_n\updateperiod) > \StateofDySys_f((k_n - 1)\updateperiod) - \DesDecreaseconst$ for all $n \geq 1$. This implies that $\ReleasetimeInc{}(k_n \updateperiod) = \ReleasetimeAdj{} \ReleasetimeInc{}((k_n-1) \updateperiod)$ for all $n \geq 1$. Since $\ReleasetimeInc{}(\cdot)$ is non-decreasing and $\ReleasetimeAdj{} > 1$,  it follows that $\lim_{t \ra \infty}\ReleasetimeInc{}(t) = \infty$, which in turn implies that $\limsup_{t \ra \infty}\Releasetime{}(t) = \infty$. Let $k_f$ be such that $\Releasetime{}(k_f\updateperiod) > |\numsources| (\Tempty + T_{\text{empty}})(1 + \ReleasetimeDecConst{2}/\updateperiod)$, where $T_{\text{empty}}$ is an upper bound on the additional time required for all \agent{s} on the mainline to leave the network after reaching the free flow state, if no additional \agent{} is released. Let $t_f := k_f\updateperiod$. Note that for all $t \in [t_f, t_f + |\numsources|(\Tempty+T_{\text{empty}})]$, $g(t) > |\numsources|(\Tempty+T_{\text{empty}})$. Thus, each \source{} releases at most one \agent{} during the interval $[t_f, t_f + |\numsources|(\Tempty+T_{\text{empty}})]$. Hence, there exists a time interval of length at least $(\Tempty+T_{\text{empty}})$ in $[t_f, t_f + |\numsources|(\Tempty+T_{\text{empty}})]$ during which no \source{} releases a \agent{}. Condition (VC2) then implies that all \agent{s} reach the free flow state after at most $\Tempty$ time, and leave the network after at most an additional $T_{\text{empty}}$ time, at the end of which (say at time $t_0$), the network becomes empty. Thus, for all $k \geq k_0 := \lceil t_0/\updateperiod \rceil$, $\StateofDySys_f(k\updateperiod) = 0$, which is a contradiction to the assumption that $\StateofDySys_f(k\updateperiod)\neq 0$ for infinitely many $k$.
\par
\begin{figure}[t]
    \centering
    \includegraphics[width=0.38\textwidth]{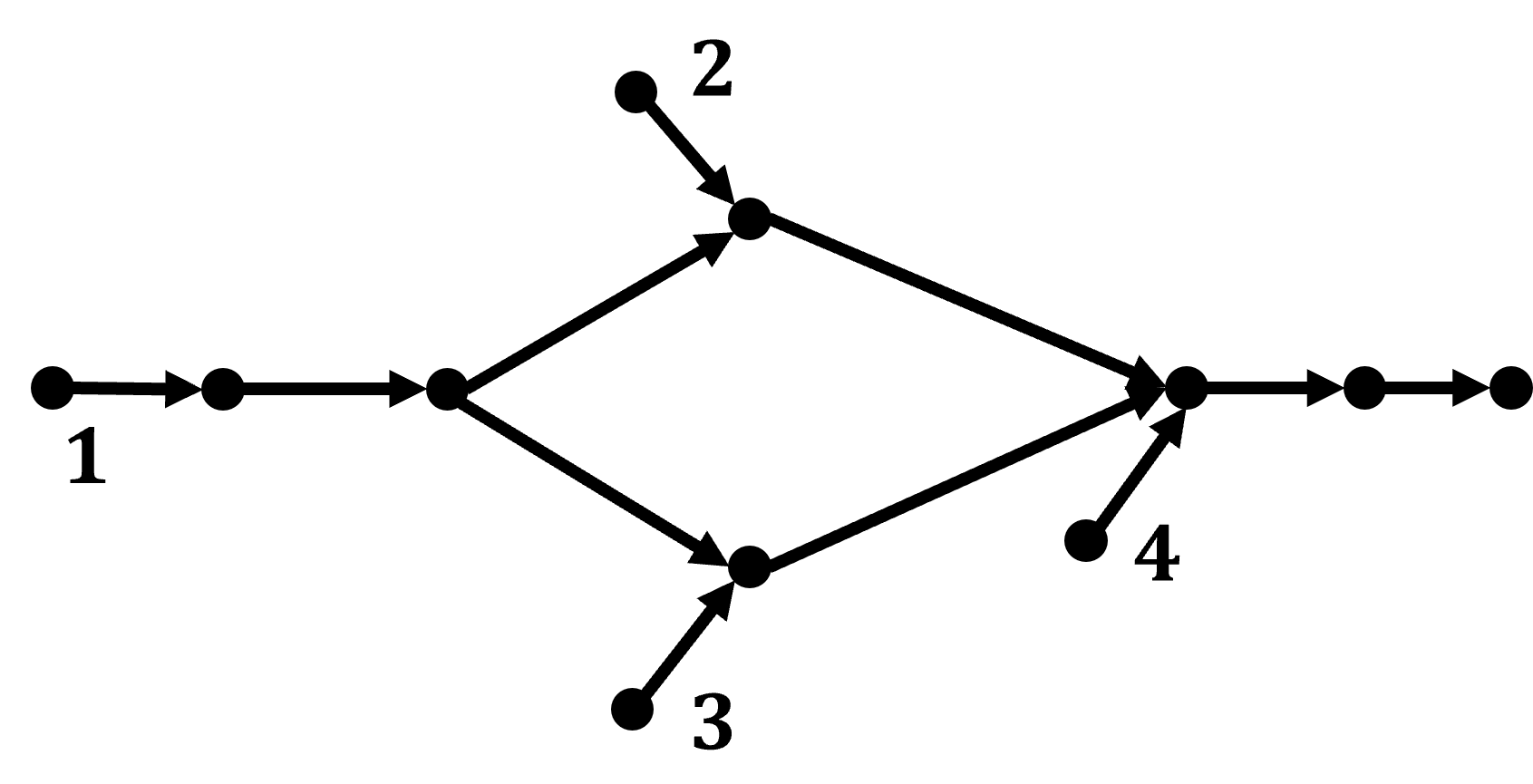}
    
    \vspace{0.1 cm}
    
    \caption{\sf For \source{} $4$, $U_{4}^{0} = \{\{4\}\}$, $U_{4}^{1} = \{\{2,3\}\}$, and $U_{4}^{2} = \{\{1,1\}, \{1,3\}, \{1,2\}\}$.}
    \label{fig:network-example-proof}
\end{figure}
We assume that the \agent{s} are initially in the free flow equilibrium state, and the location of each \agent{} coincides with a slot at $t=0$ as in Appendix~\ref{sec:performance-analysis}. For the sake of readability, we present proofs of intermediate claims at the end. We adopt the Markov chain setting from Appendix~\ref{sec:performance-analysis} with $\{\StateofMCExp_{\triangle}(n)\}_{n \geq 0}$ as the Markov chain, where $\triangle = k\Cyclelength$ and $k \in \N$ is to be determined. For \source{} $i \in \numsources$, let $\Nodedegree{i}(n)$ be the \emph{degree} of \source{} $i$ representing the number of \agent{s} in the network at time $n$ that need to cross \source{} $i$'s node in order to reach their destination. We recursively define the following collections of index sets:
\begin{equation*}
        U_{i}^{0} = \{\{i\}\},~ U_{i}^{1} = \{\numsources_{i}^{-}\},  
\end{equation*}
and for $k \geq 2$,
\begin{equation*}
    U_{i}^{k} = \{\cup_{j \in I} I_{j}:~ I \in U_{i}^{k-1}, I_j = \{j\}~\text{or}~\numsources_{j}^{-}\}\setminus U_{i}^{k-1},
\end{equation*}
where each $\cup_{j \in I} I_{j}$ is treated as a \emph{multiset}; see Figure~\ref{fig:network-example-proof}. In words, $I \in U_{i}^{1}$, i.e., $I = \numsources_{i}^{-}$, is the set of all \source{s} that precede \source{} $i$; if $I \in U_{i}^{2} \cup U_{i}^{1}$, then for any \source{} $j$ that precedes \source{} $i$, either $j \in I$ or all \source{s} that precede $j$ belong to $I$; and so on. Note that since the network is acyclic, the number of sets $U_{i}^{k}$ is finite for every $i$. Let $U = \cup_{i \in \numsources, k \geq 0}U_{i}^{k}$. Consider the candidate Lyapunov function $V: \StateSpace \ra [0,\infty)$ defined as: 
\begin{equation}
\label{eq:lyap-def}
 V(n) \equiv \Lyap{\StateofMCExp_{\triangle}(n)} := \Nodedegree{}^2(n\triangle),
\end{equation}
where $\StateSpace$ is the range of values of $Y$ in Appendix~\ref{sec:performance-analysis}, and
    $\Nodedegree{}(m) = \max\limits_{I \in U} \sum_{j \in I} \Nodedegree{j}(m)$.
\par
Note that \eqref{eq:lyap-def} implies $V(n+1) - V(n) = \Nodedegree{}^2((n+1)\triangle) - \Nodedegree{}^2(n\triangle)$. We show at the end of the proof that if $\Lyap{n}$ is large enough, specifically if $\Lyap{n} > L := \left((C_1+1)\triangle^2 + C_2\right)^2$, where $C_1 := (|\numsources|+\Avgload{}+\delta)|\numsources||\numpaths|$ and $C_2:= (2\Cyclelength +\max_{j \in \numsources}( \numcells{j}+a_{j}))|\numsources||\numpaths|$, then
\begin{equation}\label{eq:node-degree-inequality}
    \begin{aligned}
    \Nodedegree{}((n+1)\triangle) \leq \Nodedegree{}(n\triangle) + \MaxNumArrival{}{}{\triangle} + C_2,
    \end{aligned}
\end{equation}
where $\MaxNumArrival{}{}{\triangle}$ satisfies the following: it is independent of $\Nodedegree{}(n\triangle)$, and there exists $\triangle_1, \eps, C_3 > 0$ such that for all $\triangle > \triangle_1$ we have
\begin{equation}\label{eq:Kolmogorov-implication}
    \E{\MaxNumArrival{}{}{\triangle}} < -\eps \triangle, \quad \E{\MaxNumArrival{}{2}{\triangle}} < C_3 \triangle^2.
\end{equation}
The inequality \eqref{eq:node-degree-inequality} implies that
\begin{equation}\label{eq:node-degree-inequality-squared}
\begin{aligned}
        \Nodedegree{}^2((n+1)\triangle) \leq \Nodedegree{}^2(n\triangle) &+ 2\Nodedegree{}(n\triangle)\left(\MaxNumArrival{}{}{\triangle}+C_2\right)\\
        &+ \MaxNumArrival{}{2}{\triangle} + 2C_2\MaxNumArrival{}{}{\triangle} + C_{2}^{2}.
\end{aligned}
\end{equation}
\par
By choosing $\triangle \geq \triangle_1$ and taking conditional expectation from both sides of \eqref{eq:node-degree-inequality-squared} we obtain 
\begin{equation*}
\label{eq:lyap-diff}
\begin{aligned}
        \mathbb{E}\left[\right.&\left.\Lyap{n+1} - \Lyap{n}\:|\: V(n) > L\right] \leq -\Nodedegree{}(n\triangle) \\
        &+ 2\Nodedegree{}(n\triangle)(-\eps \triangle + C_2 + \frac{1}{2}) + C_3\triangle^2 -  2\eps C_2\triangle + C_{2}^{2}.
\end{aligned}
\end{equation*}
\par
Since for any $i \in \numsources$, $|\Queuelength{i}(n\triangle)| \leq \Nodedegree{i}(n\triangle) \leq \Nodedegree{}(n\triangle)$, we have $\|\Queuelength{}(n\triangle)\|_{\infty} := \max_{i \in \numsources}|\Queuelength{i}(n\triangle)| \leq \Nodedegree{}(n\triangle)$. Hence, choosing $\triangle > \max\{\triangle_1,\triangle_2 := \frac{C_2+1/2}{\eps}\}$ gives
\begin{equation*}
\begin{aligned}
        \mathbb{E}\left[\right.&\left.\Lyap{n+1} - \Lyap{n}\:|\: V(n) > L\right] \leq -\|\Queuelength{}(n\triangle)\|_{\infty} \\ &-2\eps(C_1+1)\triangle^3
        + \left((C_1+1)(2C_2+1)+C_3\right)\triangle^2 \\
        &- 4\eps C_2\triangle + 3C_2^2+1.
\end{aligned}
\end{equation*}
\par
If $\triangle_3$ is chosen such that for all $\triangle > \max\{\triangle_1,\triangle_2,\triangle_3\}$, 
\begin{equation}\label{eq:sufficient-large-triangle}
\begin{aligned}
    -2\eps(C_1+1)\triangle^3 &+ \left((C_1+1)(2C_2+1)+C_3\right)\triangle^2 \\
    &-4\eps C_2\triangle + 3C_2^2+1 < 0,
\end{aligned}
\end{equation}
then $\E{\Lyap{n+1} - \Lyap{n}|\: \Lyap{n} > L} \leq -\|\Queuelength{}(n\triangle)\|_{\infty}$. Such a $\triangle_3$ always exists because the $-2\eps(C_1+1)\triangle^3$ term in \eqref{eq:sufficient-large-triangle} dominates for sufficiently large $\triangle$. We let $\tilde{\triangle} > \max\{\triangle_1,\triangle_2, \triangle_3\}$ be the minimum natural number that is divisible by $\Cyclelength$. 
\par
Finally, if $V(n) \leq L$, then $\|\Queuelength{}(n\triangle)\|_{\infty} \leq \sqrt{L}$. Also, $V(n+1) \leq V(n) + (\sqrt{L}+|\numsources|\tilde{\triangle})^2$ because the total number of arrivals at each time step does not exceed $|\numsources|$. Combining this with the previously considered case of $\Lyap{n}>L$, we get
\begin{equation*}
\begin{aligned}
    \mathbb{E}\left[\right.&\left.\Lyap{n+1} - \Lyap{n}|\: \StateofMCExp_{\tilde{\triangle}}(n)\right] \leq -\|\Queuelength{}(n\triangle)\|_{\infty} \\
    &+ \left((\sqrt{L}+|\numsources|\tilde{\triangle})^2 + \sqrt{L} + 1\right)\mathds{1}_{B},
\end{aligned}
\end{equation*} 
where $B = \{\StateofMCExp_{\tilde{\triangle}}(n) \in \StateSpace: V(n) \leq L\}$ (a finite set). The result then follows from the well-known Foster-Lyapunov drift criterion \cite[Theorem 14.0.1]{meyn2012markov}.
\subsection*{\underline{Proof of \eqref{eq:node-degree-inequality}}}
Consider \source{} $i \in \numsources$. If for every interval $[b_{i}m, b_{i}(m+1))$ in $[n\triangle, (n+1)\triangle-1]$, $a_{i}$ \agent{s} cross \source{} $i$, then 
\begin{equation*}
\begin{aligned}
        \Nodedegree{i}&((n+1)\triangle) \leq 
        \Nodedegree{i}(n\triangle) + \Numarrivals{i}(n\triangle+1,(n+1)\triangle) - \lfloor\frac{\triangle}{b_{i}}\rfloor a_{i} \\
        &\leq \Nodedegree{}(n\triangle) + \Numarrivals{i}(n\triangle+1,(n+1)\triangle) - \frac{a_{i}}{b_{i}}\triangle + a_{i}, 
\end{aligned}
\end{equation*}
where $\Numarrivals{i}(m_1,m_2)$ is the total number of arrivals in the network during the time interval $[m_1,m_2]$ that want to cross \source{} $i$'s node in order to reach their destination. If not, then there exist some time step in $[n\triangle, (n+1)\triangle-1]$ during which a valid slot that is empty crosses \source{} $i$'s node. Let $m_i \in [n\triangle, (n+1)\triangle-1]$ be the last time step at which this happens, i.e., $\Nodedegree{i}(m_i+1) = \Nodedegree{i}(m_i) + \Numarrivals{i}(m_i+1,m_i+2)$. Hence,
\begin{equation}\label{eq:node-degree-inequality-source-i}
\begin{aligned}
    \Nodedegree{i}((n+1)\triangle) &\leq \Nodedegree{i}(m_i+1) + \Numarrivals{i}(m_i+2,(n+1)\triangle) + a_{i} \\
    &- \frac{a_{i}}{b_{i}}((n+1)\triangle-m_i-1).
\end{aligned}
\end{equation}
\par
By the assumption of the theorem, i.e., $\Avgload{i} < \frac{a_{i}}{b_{i}}$ for all $i \in \numsources$, there exists $\delta > 0$ such that for all $i \in \numsources$,
\begin{equation*}
    \Avgload{i} + \delta <  \frac{a_{i}}{b_{i}}.
\end{equation*} 
\par
Thus, \eqref{eq:node-degree-inequality-source-i} can be rewritten as
\begin{equation}\label{eq:node-degree-inequality-source-i-rewritten}
\begin{aligned}
    \Nodedegree{i}((n+1)\triangle) &\leq \Nodedegree{i}(m_i+1) + \Numarrivals{i}(m_i+2,(n+1)\triangle)\\
    &- (\Avgload{i} + \delta)((n+1)\triangle-m_i-1) + a_{i}.
\end{aligned}
\end{equation}
\par
Furthermore, the queue size at \source{} $i$ at time $m_i+1$ cannot exceed $\Cyclelength$, i.e., $|\Queuelength{i}(m_i+1)| \leq \Cyclelength$. This is because, a valid slot that is empty crosses \source{} $i$ at time step $m_i$. Therefore, the quotas of \source{} $i$ at time $m_i$ must be zero. This implies that $\Queuelength{i}(m_i+1)$ only consists of \agent{s} arriving after the start of the most recent cycle before $m_i+1$. Since the number of arrivals to \source{} $i$ is no more than one per time step, it follows that $|\Queuelength{i}(m_i+1)| \leq \Cyclelength$. Hence,
        $\Nodedegree{i}(m_i+1) \leq \sum_{j \in \numsources_{i}^{-}} \Nodedegree{j}(m_i+1) + |\Queuelength{i}(m_i+1)| + \numcells{i} 
        \leq \sum_{j \in \numsources_{i}^{-}} \Nodedegree{j}(m_i+1) + \Cyclelength + \numcells{i}$,
where $\numcells{i}$ is the total number of slots between \source{} $i$ and any \source{} $j \in \numsources_{i}^{-}$.  
Combining this with \eqref{eq:node-degree-inequality-source-i-rewritten} gives
\comment{
\begin{equation*}
    \Nodedegree{i}(n+1) \leq \Nodedegree{i}(n-\triangle+1) - \Numdepartures{i}(m_i,k) + \Numarrivals{i}(k-\triangle+2,k+1) + \Cyclelength + a_{i},
\end{equation*}
where
\begin{equation*}
\begin{aligned}
    \Numdepartures{i}(s_i,k) &= (\Avgload{i}+\delta)(k-s_i+1) + \left(\Numarrivals{i}(k-\triangle+2,s_i+1) + \Nodedegree{i}(k-\triangle+1)\right)\indic{i}, \\
    \indic{i} &= \begin{cases}
    1 & \mbox{if } s_i > k-\triangle+1,~ I_{i}^{1}=\emptyset \\
    0 & \mbox{otherwise}
    \end{cases}.
\end{aligned}
\end{equation*}
}
\begin{equation}\label{Eq: (Prop 4) Starting inequality for off-ramp p node degree in terms of off-ramp p-1}
\begin{aligned}
    \Nodedegree{i}((n+1)&\triangle) \leq \sum_{j \in \numsources_{i}^{-}} \Nodedegree{j}(m_i+1) + \Numarrivals{i}(m_i+2,(n+1)\triangle) \\
    &- (\Avgload{i}+\delta)((n+1)\triangle-m_i-1) + \Cyclelength + \numcells{i} + a_{i}.
\end{aligned}
\end{equation}
\par
Repeating the above steps for any \source{} $j \in \numsources_{i}^{-}$ gives
\begin{equation*} 
\begin{aligned}
    \Nodedegree{j}(m_i+1) &\leq \sum_{k \in \numsources_{j}^{-}} \Nodedegree{k}(m_j+1) + \Numarrivals{j}(m_j+2,m_i+1) \\
    &- (\Avgload{j}+\delta)(m_i-m_j) + \Cyclelength + \numcells{j} + a_{j}, 
\end{aligned}
\end{equation*}
where $m_{j} \in [n\triangle, m_i+1]$ is defined similar to $m_{i}$ if at least one valid slot that is empty crosses \source{} $j$, and $m_{j} = n\triangle-1$ otherwise. This process can be repeated until we find a multiset $J_{i} \in U$ of \source{s}, such that for each $j \in J_i$ with multiplicity $\text{mult}(j)$, either $m_j^c = n\triangle-1$, or $m_j^c > n\triangle-1$ and $\numsources_{j}^{-} = \emptyset$, $c \in [\text{mult}(j)]$. Indeed, such a $J_{i}$ always exist since the network is acyclic. Let $\tilde{J}_{i}$ be a multiset containing all the \source{s} visited in this process (for example, if $J_{i} = \numsources_{i}^{-}$, then $\tilde{J}_{i} = \{i\}\cup \numsources_{i}^{-}$). For simplicity of notation, we drop the superscript $c$ in $m_j^c$ with the understanding that $m_j$ could take different values for \source{s} with multiplicity greater than one. We also use $j^{+}$ to denote an \source{} that succeeds \source{} $j$. By combining all the inequalities in this process we obtain
\begin{equation*}
\begin{aligned}
    \Nodedegree{i}((n+1)\triangle) &\leq \sum_{j \in J_{i}}\Nodedegree{j}(m_j+1) \\
    &+ \sum_{j \in \tilde{J}_{i}}\left(\Numarrivals{j}(m_{j}+2,m_{j^{+}}+1)\right. \\
    &\left.- (\Avgload{j}+\delta)(m_{j^{+}}-m_{j}) + \Cyclelength + \numcells{j} + a_{j} \right).
\end{aligned}
\end{equation*}
\par
Note that if $m_j>n\triangle-1$ for some $j \in J_{i}$, then $\Nodedegree{j}(m_j+1) \leq \Cyclelength \leq \Nodedegree{j}(n\triangle) + \Cyclelength$, which implies $\sum_{j \in J_{i}}\Nodedegree{j}(m_j+1) \leq \sum_{j \in J_{i}}\Nodedegree{j}(n\triangle) + |\numsources||\numpaths|\Cyclelength$. Therefore,
    $\Nodedegree{i}((n+1)\triangle) \leq \Nodedegree{}(n\triangle) + \MaxNumArrival{i}{}{\triangle} + C_2$,
where 
\begin{equation}\label{eq:Af-tilde-def-ramp}
\begin{aligned}
    &\MaxNumArrival{i}{}{\triangle}= \\
    & \sum_{j \in \tilde{J}_{i}}\left(\Numarrivals{j}(m_{j}+2,m_{j^{+}}+1) - (\Avgload{j}+\delta)(m_{j^{+}}-m_{j}) \right),
    \end{aligned}
\end{equation}
where we have dropped the dependence of $\MaxNumArrival{i}{}{\triangle}$ on other parameters for brevity. Generally, if we start with any $I \in U$, we can repeat the above steps to obtain
        $\sum_{k \in I}\Nodedegree{k}((n+1)\triangle) \leq \Nodedegree{}(n\triangle) + \MaxNumArrival{I}{}{\triangle} + C_2$,
where $\MaxNumArrival{I}{}{\triangle}$ has the same form as \eqref{eq:Af-tilde-def-ramp}, with $\tilde{J}_{I}$ defined similarly. Hence, \eqref{eq:node-degree-inequality} follows with
\begin{equation}\label{eq:Af-tilde-def}
    \MaxNumArrival{}{}{\triangle} = \max_{I \in U} \MaxNumArrival{I}{}{\triangle}.
\end{equation}
\par
Note that for each $I \in U$, we may assume that there exists some $j \in J_{I}$ for which $m_j = n\triangle-1$. For if $m_j>n\triangle-1$ for all $j \in J_{I}$, then $\Nodedegree{j}(m_j+1) \leq \Cyclelength$  for all $j \in J_{I}$, which implies 
\begin{equation}\label{eq:exclude-I-inequality}
\sum_{k \in I}\Nodedegree{k}((n+1)\triangle) \leq C_1\triangle + C_2 < \Nodedegree{}(n\triangle) - \triangle \leq \Nodedegree{}((n+1)\triangle),   
\end{equation}
where in the first inequality we have used the following fact: the total number of arrivals to the network is bounded by $|\numsources|$ at each time step. Hence, for each $j \in \tilde{J}_{I}$, $\Numarrivals{j}(m_{j}+2,m_{j^{+}}+1) \leq (m_{j^{+}}-m_{j})|\numsources| \leq |\numsources|\triangle$. Also, in the third inequality we have used the following fact: $\Nodedegree{}(\cdot)$ decreases by at most one during each time step. Therefore, $\Nodedegree{}((n+1)\triangle) \geq \Nodedegree{}(n\triangle) - \triangle$. The inequality \eqref{eq:exclude-I-inequality} implies that we can exclude $I$ from the maximum in \eqref{eq:Af-tilde-def}. Thus, we may assume such $j \in \tilde{J}_{I}$ with $m_j = n\triangle-1$ exists.
\subsection*{\underline{Proof of \eqref{eq:Kolmogorov-implication}}} 
Given $I \in U$ with the corresponding multiset $J_I$, let $m_j=n\triangle-1$ for some $j \in J_{I}$. Hence, there exists a sequence of \source{s} $\{i_1, \ldots, i_K\} \subset \tilde{J}_{I}$ with $i_1=j$ and $i_K \in I$ such that for $k<K$, $i_k^{+}=i_{k+1}$. Therefore, $\sum_{k \in [K]}(m_{i_{k}^{+}}-m_{i_{k}}) = \triangle$. Consider the sequence $\{\Numarrivals{i_k}(m,m+1)\}_{m = n\triangle+1}^{\infty}$, where the indices $i_k \in \{i_1,\ldots,i_K\}$ are allowed to depend on time $m$. For a given $m \in [n\triangle+1, \infty)$, the term $\Numarrivals{i_k}(m,m+1)$ is independent of the other terms in the sequence, $\Avgload{i_k} = \E{\Numarrivals{i_k}(m,m+1)}$ is bounded, and $\sigma^2_{i_k} := \E{\Numarrivals{i_k}^2(m,m+1) - \Avgload{i_k}^2}$ is (uniformly) bounded for all $i_k \in \{i_1,\ldots,i_K\}$. As a result, $\lim_{\triangle \ra \infty}\sum_{m = n\triangle+1}^{(n+1)\triangle}(m-n\triangle)^{-2}\sigma^2_{i_k}$ is also bounded. From Kolmogorov's strong law of large numbers~\cite[Theorem 10.12]{folland1999real}, we have, with probability $1$,
\begin{equation*}
    \lim_{\triangle \ra \infty}\frac{1}{\triangle}\left(\sum_{m = n\triangle+1}^{(n+1)\triangle}\Numarrivals{i_k}(m,m+1) - \sum_{m = n\triangle+1}^{(n+1)\triangle}\Avgload{i_k}\right) = 0.
\end{equation*}
\par
Hence, with probability one, there exists $\triangle_1, \eps > 0$ such that $\eps < \delta$ and for $\triangle > \triangle_1$ we have
\begin{equation*}
    \sum_{m = n\triangle+1}^{(n+1)\triangle}\Numarrivals{i_k}(m,m+1) - \sum_{m = n\triangle+1}^{(n+1)\triangle}\Avgload{i_k} < \eps \triangle.
\end{equation*}
\par
This implies that, with probability one, for $\triangle > \triangle_1$ we have
\begin{equation*}
\begin{aligned}
    \sum_{k \in [K]}\left(\Numarrivals{i_k}(m_{i_k}+2,m_{i_k^{+}}+1) - (\Avgload{k}+\delta)(m_{i_{k}^{+}}\right.&\left.-m_{i_{k}})  \right)\\
    &< (\eps - \delta)\triangle.
\end{aligned}
\end{equation*}
\par
Furthermore, for any other $j \in \tilde{J}_{I}\setminus\{i_1,\ldots,i_K\}$, with probability one, there exists $M_j > 0$ such that $m_{j^{+}}-m_{j} > M_j$ implies 
    $\Numarrivals{j}(m_{j}+2,m_{j^{+}}+1) - (\Avgload{j}+\delta)(m_{j^{+}}-m_{j}) < (\eps-\delta)(m_{j^{+}}-m_{j})$,
and if $m_{j^{+}}-m_{j} \leq M_j$, then $\Numarrivals{j}(m_{j}+2,m_{j^{+}}+1) - (\Avgload{j}+\delta)(m_{j^{+}}-m_{j}) \leq (|\numsources|-\Avgload{j}-\delta)M_j$, since the total number of arrivals to the network is bounded by $|\numsources|$ at each time step. Therefore, with probability one, for $\triangle > \max\{\triangle_1, \triangle_2:= \frac{1}{\delta-\eps} \sum_{j} (|\numsources|-\Avgload{j}-\delta)M_j\}$ we have
\begin{equation}\label{eq:Atilde-negative-sign}
\begin{aligned}
    \frac{1}{\triangle}\sum_{j \in \tilde{J}_{I}}&\left(\Numarrivals{j}(m_{j}+2,m_{j^{+}}+1) - (\Avgload{j}+\delta)(m_{j^{+}}-m_{j}) \right) \\
    &< (\eps-\delta) + \frac{1}{\triangle}\sum_{j}(|\numsources|-\Avgload{j}-\delta)M_j < 0
\end{aligned}
\end{equation}
which in turn implies that, with probability one, $\limsup_{\triangle \ra \infty}\MaxNumArrival{I}{}{\triangle}/\triangle < 0$. It follows that, with probability one, $\limsup_{\triangle \ra \infty}\MaxNumArrival{}{}{\triangle}/\triangle < 0$.
\par
Finally,
\begin{equation}\label{eq:bound-case-i1}
    \left|\frac{\MaxNumArrival{I}{}{\triangle}}{\triangle}\right| 
    \leq \frac{1}{\triangle}\sum_{j \in \tilde{J}_{I}}(|\numsources|+\Avgload{j}+\delta)(m_{j^{+}}-m_{j}) \leq C_2, 
\end{equation}
where we have used $\Numarrivals{j}(m_{j}+2,m_{j^{+}}+1) \leq (m_{j^{+}}-m_{j})|\numsources|$ in the first inequality using the fact that the total number of arrivals to the network is bounded by $|\numsources|$ at each time step. Therefore, the sequences $\left\{\MaxNumArrival{}{}{\triangle}/\triangle\right\}_{\triangle=1}^{\infty}$ and $\left\{\left(\MaxNumArrival{}{}{\triangle}/\triangle\right)^2\right\}_{\triangle=1}^{\infty}$ are upper bounded by integrable functions. Fatou's lemma \cite{folland1999real} then implies
\begin{equation*}
\begin{aligned}
    \limsup_{\triangle \ra \infty} \E{\frac{\MaxNumArrival{}{}{\triangle}}{\triangle}} &\leq \E{\limsup_{\triangle \ra \infty} \frac{\MaxNumArrival{}{}{\triangle}}{\triangle}} < 0, \\
    \limsup_{\triangle \ra \infty} \E{\left(\frac{\MaxNumArrival{}{}{\triangle}}{\triangle}\right)^2} &\leq \E{\limsup_{\triangle \ra \infty} \left(\frac{\MaxNumArrival{}{}{\triangle}}{\triangle}\right)^2} < C_3,
\end{aligned}
\end{equation*}
where $C_3 > C_{2}^{2}$ is some constant. This in turn gives \eqref{eq:Kolmogorov-implication}.


\end{document}